\documentclass[11pt]{article}
\usepackage[letterpaper, margin=1in]{geometry}
\usepackage{enumitem}
\usepackage{mathtools}
\usepackage{amsthm}
\usepackage{amssymb}
\usepackage{xspace}
\usepackage{bm}

\usepackage[breakable]{tcolorbox}
\usepackage{longtable}
\usepackage{booktabs}
\usepackage{graphicx}
\usepackage{relsize}
\usepackage[left, pagewise]{lineno}
\usepackage{tikz}
  \usetikzlibrary{positioning,fit,calc}
  \tikzstyle{block} = [draw=black, ultra thin, text width=2cm, minimum height=1.3cm, font = {\footnotesize\itshape},align=center]
  \tikzstyle{arrow} = [thick,->,>=stealth]

  \renewenvironment{quote}
  {\list{}{\leftmargin=0.5cm \rightmargin=0.5cm} 
   \item\relax}
  {\endlist}

  \usepackage[page,toc,titletoc,title]{appendix}
\usepackage[textsize=footnotesize]{todonotes}
\usetikzlibrary{shapes,arrows,positioning,patterns}
\usepackage{comment}
\usepackage{hyperref}
\usepackage{cleveref}
\hypersetup{
    colorlinks=true,
    citecolor=blue,
    linkcolor=blue,
    filecolor=blue,
    urlcolor=blue,
}
\usepackage{thm-restate}
\usepackage{placeins}
\usepackage[font=small,labelfont=bf,margin=0.06in]{caption}

\usepackage[style=alphabetic,maxnames=999]{biblatex}
\AtEveryBibitem{
  \clearfield{editor}
  \clearlist{editor}
  \clearname{editor}
  \clearfield{location}
  \clearlist{location}
  \clearname{location}
  \clearfield{isbn}
  \clearlist{isbn}
  \clearname{isbn}
  \clearfield{url}
  \clearlist{url}
  \clearname{url}
}

\RequirePackage[osf,sc]{mathpazo}
\usepackage{framed}
\usepackage{mdframed}
\usepackage{float}
\usepackage{xcolor}

\newcommand{\ind}[2][]{%
  \mathrel{
    \mathop{
      \vcenter{
        \hbox{\oalign{\noalign{\kern-.3ex}\hfil$\vert$\hfil\cr
              \noalign{\kern-.7ex}
              $\smile$\cr\noalign{\kern-.3ex}}}
      }
    }^{#2}\displaylimits_{#1}
  }
}

\newcommand{\eqdef}{\coloneqq}

\newlength{\leftbarwidth}
\newlength{\leftbarsep}
\renewenvironment{leftbar}[1][blue]
{%
\def\FrameCommand
{%
{\hspace{-8pt} \color{#1} \vrule width 3pt}%
\hspace{0pt}
\fboxsep=\FrameSep\colorbox{#1!5!}%
}%
\MakeFramed{\hsize\hsize\advance\hsize-\width\FrameRestore}%
}
{\endMakeFramed}
\makeatletter

  \if@todonotes@disabled 
  \excludecomment{szin}
  \else 
  
  \fi
  \if@todonotes@disabled 
  \excludecomment{janin}
  \else 
  
  \fi

\makeatother

\newcommand\myitem[1]{\hyperref[item:#1]{\emph{\ref*{item:#1}.}{}}\xspace}

\newtheorem{theorem}{Theorem}[section]
\newtheorem*{theorem*}{Theorem}

\newtheorem{corollary}[theorem]{Corollary}
\newtheorem*{corollary*}{Corollary}
\newtheorem{lemma}[theorem]{Lemma}
\newtheorem*{lemma*}{Lemma}

\newtheorem{observation}[theorem]{Observation}

\newtheorem*{proposition*}{Proposition}

\crefname{claim}{claim}{Claims} 
\Crefname{claim}{Claim}{Claims} 

\newenvironment{claimproof}[1][\proofname]{%
  \begin{proof}[#1]%
}{%
  \end{proof}%
}

\theoremstyle{remark}

\newcommand{\from}{\colon}

\newcommand{\set}[1]{\{#1\}}

\newcommand{\setof}[2]{\set{#1 \,|\, #2 }}

\def\phi{\varphi}
\def\cal{\mathcal}
\def\N{\mathbb N}

\def\epsilon{\varepsilon}

\renewcommand{\subset}{\subseteq}

\renewcommand{\le}{\leqslant}
\renewcommand{\ge}{\geqslant}

\newcommand{\dist}{\mathrm{dist}}
\newcommand{\distFO}{\textrm{\upshape dist-FO}\xspace}

\newcommand{\tup}{\bar}

\newcommand{\ERCagreement}{\xspace ST received funding from the European Research Council (ERC) (grant agreement  №101126229 -- {\sc buka}).
\begin{tikzpicture}[remember picture, overlay]
  \coordinate (refpoint) at (12,0); 

  \node[anchor=south, yshift=0cm] at (refpoint)
  {\includegraphics[width=40px]{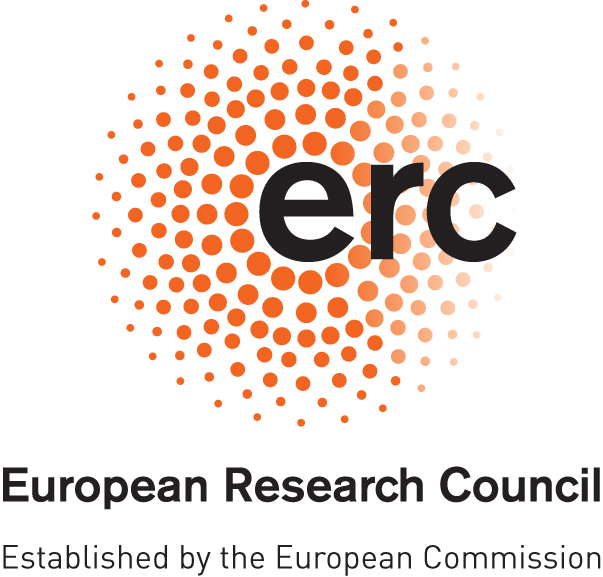}};

  \node[anchor=south, yshift=-2cm] at (refpoint)
  {\includegraphics[width=60px]{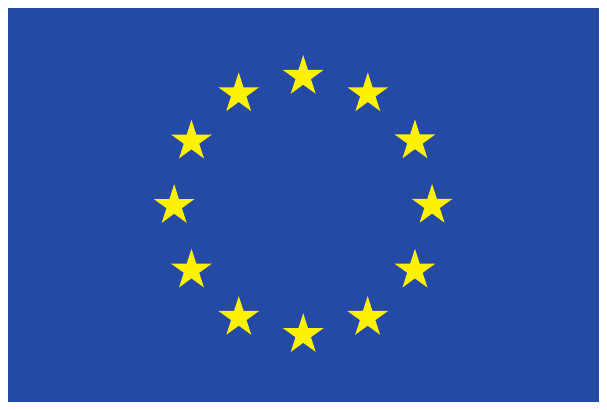}};
\end{tikzpicture}
}

    \title{A Rank-Preserving Locality Theorem}
\author{Jan Dreier \and Szymon Toru{\'n}czyk}
\date{\today}

\begin{document}\maketitle
\begin{abstract}
    We prove a rank-preserving locality theorem for a syntactic variant of first-order logic, in the spirit of Gaifman's locality theorem and the rank-preserving locality theorem of Grohe, Kreutzer, and Siebertz. 
    Our result allows for a weak form of scatter sentences, which can be evaluated more efficiently than usual scatter sentences considered in prior work. 
    This is crucial in our application to graphs of bounded merge-width.
\end{abstract}

\paragraph{Acknowledgements.}
We are grateful to Nikolas M\"ahlmann for many discussions, and spotting an error in \cite{mw-arxiv-v1}. \ERCagreement

\section{Introduction}\label{sec:intro}
The goal of this note is to prove the following locality result for a variant of first-order logic. All relevant notions are defined in detail in \Cref{sec:preliminaries}, and discussed briefly below.

\begin{restatable}[Locality theorem for \distFO]{theorem}{localglobalgame}\label{thm:localglobal}
    Fix $k,q \in \N$. Every \distFO formula $\phi(\bar x)$ of distance rank $(k,q)$
     is equivalent to a boolean combination of
    local \distFO formulas and \distFO scatter sentences, all of distance rank $(k,q)$.    
    This boolean combination can be effectively computed, given $k,q$, and $\phi$.
\end{restatable}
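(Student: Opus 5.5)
We prove the theorem with an Ehrenfeucht--Fra\"iss\'e style argument adapted to \distFO, in the spirit of the game-theoretic proof of Grohe, Kreutzer, and Siebertz. The guiding principle: two tuples $\bar a\in G$, $\bar b\in H$ that satisfy the same local \distFO formulas of distance rank $(k,q)$ around them, in structures $G,H$ that satisfy the same \distFO scatter sentences of rank $(k,q)$, must satisfy the same \distFO formulas of rank $(k,q)$. Once this is shown, the statement follows by a standard compactness-free argument. Up to equivalence there are only finitely many local formulas and scatter sentences of rank $(k,q)$, and these can be computed from $k,q$. The formula $\phi$ is therefore equivalent to the disjunction of the complete ``local-and-scatter types'' consistent with $\phi$.

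Effectivity comes from deciding which types imply $\phi$. One option is to carry out the transfer argument constructively, so that it yields an explicit translation by induction on $q$. I would instead phrase the main lemma as a syntactic rewriting. For each formula $\exists y\,\psi(\bar x,y)$, with $\psi$ already a boolean combination, I would produce a boolean combination whose equivalence follows from the proof and is not merely checked semantically.

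The plan proceeds by induction on the quantifier rank $q$, keeping the distance parameter $k$ fixed. I first characterise rank-$(k,q)$ equivalence by a game. In each round Spoiler picks an element, and the distance atoms $\dist(x,y)\le r$ with $r$ bounded by the radius allowed by $k$ must be preserved. The inductive step handles $\exists y\,\psi(\bar x,y)$ by splitting into two cases. In the first case $y$ is close to some $x_i$, within the radius permitted at that level. Then $\exists y$ is absorbed into a local formula around $\bar x$, and the radius bookkeeping must be arranged so the result still has distance rank $(k,q)$, not larger. In the second case $y$ is far from all of $\bar x$. Then $\psi$ decomposes, by the induction hypothesis and the fact that the distance atoms between $y$ and $\bar x$ are all false, into a combination of local formulas in $\bar x$ and local formulas in $y$. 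The existence of a far witness $y$ with a given local type is then expressed with weak scatter sentences. These say there are many pairwise-far witnesses, or else the witnesses are confined to a few bounded-radius balls, and the latter condition can be tested locally around $\bar x$ together with a counting sentence.

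The main obstacle is rank preservation in the far case. The classical Gaifman argument counts up to $|\bar x|+1$ scattered witnesses, which inflates both the radius and the quantifier rank. Here the scatter sentences must stay at rank $(k,q)$. I would try to exploit the definition of \distFO, where distance atoms at rank $(k,q)$ presumably come with radii that shrink geometrically as quantifiers are used. The rough idea is that the far case only needs radius $r$ while locality is tested at radius about $r/2$ or $r/3$. This slack should let the ``few balls'' alternative be rewritten inside the local formula without extra rank. I expect the weak scatter sentences in the paper's definition to be designed exactly for this purpose, namely to assert scattered witnesses for one rank lower. In the proof I would therefore check carefully that the number of witnesses needed, $|\bar x|+1$, is matched by what the weak scatter notion permits.
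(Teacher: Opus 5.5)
\paragraph{Verdict.} Your skeleton matches the paper's: absorb the near part of $\exists y\,\psi$ as a local quantification, separate $\psi$ on the far part into $\bigvee\alpha(\bar x)\land\beta(y)$, then express the existence of a far witness of $\beta$ by scatter sentences. However, the far-witness step, which carries all the difficulty, is left as an expectation. The mechanism you sketch for it does not work with the scatter sentences available here.

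\paragraph{Why your far-witness mechanism fails.} $\mathbf{scatter}(r,\beta(x),t)$ does not assert that $t$ pairwise far witnesses exist. It only says that one arbitrarily fixed inclusion-wise maximal $r$-scattered subset $S$ of $X=\{a : A\models\beta(a)\}$ has size at least $t$.
\begin{itemize}
\item If $|S|\ge k+1$, pigeonhole gives a far witness.
\item If $|S|\le k$, you learn nothing about the maximum size. Your fallback, ``the witnesses are confined to a few balls'', concerns balls around the points of $S$. These need not be near $\bar x$, so no local formula around $\bar x$ can inspect them.
\end{itemize}
Comparing $|S|$ with a count taken around $\bar x$ is the right instinct, but a count at a fixed scale fails. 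Take a path with $\dist(a_1,a_2)=5r$, $u\in N_r(a_1)$ and $v\in N_r(a_2)$ with $\dist(u,v)=3r$, and an isolated vertex $w$, and let $X=\{u,v,w\}$. Then the maximal $4r$-scattered set $\{u,w\}$ has size $2$, which equals the number of balls $N_r(a_i)$ meeting $X$, yet $w$ is a far witness. Similarly, a witness at distance just above $r$ from $\bar x$ can be the unique element of $S$ near $\bar x$ and go undetected.

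\paragraph{The missing idea.} This is \Cref{lem:scatter} together with \Cref{cor:clusters}. Set $r=\rho^-(k+1,q-1)$.
\begin{itemize}
\item A Vitali-type merging of the points $a_i$ picks a scale $R\in[r,9^{k-1}r]$. It splits $N_r(\bar a)$ into at most $k$ clusters, each inside a ball of radius $2R$ and pairwise more than $4R$ apart. $R$ and the partition are definable by distance atoms among $\bar x$ of radius at most $8\cdot 9^{k-1}r\le\rho^-(k,q)$.
\item Either some witness lies in the annulus $r<\dist(y,\bar x)\le\rho^+(k+1,q-1)$, which is a local formula because $4R+r\le\rho^+(k+1,q-1)$.
\item Or else every maximal $4R$-scattered $S\subseteq X$ has exactly one point in each cluster meeting $X$. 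Then a far witness exists iff fewer clusters meet $X$ than $|S|$.
\end{itemize}
Because this holds for \emph{every} maximal $S$, weak scatter sentences $\mathbf{scatter}(4R,\beta,t)$ with $t\le k+1$ suffice.

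Two consequences for your rank concerns:
\begin{itemize}
\item Quantifier rank is never inflated, because scatter sentences are primitive.
\item The slack actually needed is a factor $9^{k}$, encoded as $\rho^+(k,q)\ge 9^k\rho^-(k,q)$, to accommodate $R$ up to $9^{k-1}r$. A constant factor such as $r/2$ or $r/3$, as you suggest, is not enough for this argument.
\end{itemize}

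\paragraph{Two smaller issues.} First, the separation of $\psi$ into $\bigvee\alpha(\bar x)\land\beta(y)$ does not follow from the theorem's induction hypothesis plus the falsity of atoms between $y$ and $\bar x$. Inner locally quantified variables may lie near $\bar x$, near $y$, or in between. This needs its own structural induction (\Cref{lem:far-ltp}), which uses $\rho^-(k,q)-\rho^-(k+1,q-1)\ge\rho^+(k+1,q-1)$ and re-splits the intermediate annulus case.

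Second, $k$ counts free variables; it is not a distance parameter. The induction must therefore pass from $(k,q)$ to $(k+1,q-1)$ and cannot keep $k$ fixed.
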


The logic \distFO is equivalent in expressive power to first-order logic, but its syntax allows for a more refined measure of quantifier rank, called \emph{distance rank}. \emph{Local} \distFO formulas are a fragment which only allow for \emph{local quantification}, which ranges over a neighborhood of the free variables of fixed radius.

This statement is in the spirit of Gaifman's locality theorem \cite{gaifman},
and is 
almost the same as a statement in our earlier work \cite[Theorem 4.1]{mw-stoc,mw-arxiv-v1}.
In this note we fix an error in the proof from \cite{mw-arxiv-v1}, which was spotted by Nikolas M\"ahlmann. Apart from that, we also simplify the proof significantly.

Our result is inspired by a locality theorem of Grohe, Kreutzer, and Siebertz  \cite{gks}.
Similarly as in their result, the challenge is to 
rewrite a formula into a boolean combination of scatter sentences and local formulas in a way which does not increase some notion of rank. 
This is achieved by changing the syntax, by introducing \emph{distance atoms} and \emph{local quantification}.
The original motivation for those results was to prove that 
the problem of evaluating first-order formulas is fixed-parameter tractable in various graph classes (nowhere dense classes in \cite{gks} and classes of bounded merge-width in \cite{mw-stoc}).

\paragraph{Comparison with the work of Grohe, Kreutzer, and Siebertz.}
Comparing \Cref{thm:localglobal} to the locality theorem  \cite[Theorem 7.5]{gks} of Grohe, Kreutzer, and Siebertz, there are three main differences.
The first one is that we consider a weak notion of \emph{scatter sentences}. In Gaifman's result, as well as the result of Grohe, Kreutzer, and Siebertz, scatter sentences assert the existence of $t$ elements 
of some set $X$ which are mutually at distance greater than $r$ in the Gaifman graph, where $r$ and $t$ are fixed constants, and $X$ is specified by a formula $\alpha(x)$. Determining this is algorithmically hard in some contexts -- from the perspective of parametrized complexity, this problem is at least as hard as the distance-$r$ independent set problem, which thus needs to be solved separately \cite[Theorem 5.1]{gks}.
 In contrast, the scatter sentences in \Cref{thm:localglobal} 
merely ask whether a predetermined inclusion-wise maximal $r$-scattered subset of $X$ has size at least~$t$. Such a set can be easily computed by a greedy algorithm. This is crucial in our algorithmic application of \Cref{thm:localglobal} to graphs of bounded merge-width, as it allows to circumvent the independent set problem.

Next, the result of \cite{gks} evaluates the resulting formulas in a structure expanded by suitable unary predicates, whereas our resulting formulas are equivalent to the input formulas on any given (unmodified) structure. This is also crucial in our application to graphs of bounded merge-width \cite{mw-stoc}.

Finally, the result of \cite{gks} applies only to formulas $\phi(x)$ with one free variable, whereas in our statement, $\phi(\bar x)$ is a formula with any set $\bar x$ of free variables.



\paragraph{Comparison with the work of Grohe and Schweikardt.}
Independently of this work, Grohe and Schweikardt have very recently published a note \cite{grohe2026rankpreservinggaifmannormalform}.
The results of \cite{grohe2026rankpreservinggaifmannormalform} use a very similar syntactic fragment of first-order logic to ours, and measure the distance rank of formulas in a very similar way. The overall goal is also very similar -- to provide a locality theorem for first-order logic, in which formulas are expressed as boolean combinations of local formulas and scatter sentences,
and where the rewriting preserves some notion of \emph{rank} (and also 
to fix an error in a previous paper \cite{gs-pods}).

Overall, the results of the two notes are incomparable, however, two results --
\Cref{cor:normalform} below and the core result \cite[Theorem 6.3]{grohe2026rankpreservinggaifmannormalform} -- are in spirit the same statements, although minor details differ. 
Some differences in the two notes are as follows:
\begin{enumerate}
    \item Our result considers more flexible scatter sentences, which express that a single, arbitrarily chosen, inclusion-wise maximal $r$-scattered set has a given size, while \cite{grohe2026rankpreservinggaifmannormalform} considers scatter sentences which express the existence of an $r$-scattered set of a given size. The former is more general and allows for more efficient algorithmic applications, circumventing the independent set problem. In particular, the more flexible form of scatter sentences allows us to apply the locality theorem in the setting of model-checking of first-order formulas on graphs of bounded merge-width \cite{mw-arxiv-v1,mw-stoc}.
    \item The horizon function in our note is of the order $2^{\Theta(q^2)}$, which is larger than the horizon function in \cite{grohe2026rankpreservinggaifmannormalform}, which is of the order $2^{\Theta(q)}$. 
    This seems to be related to the more 
flexible form of scatter sentences in this paper.

    \item The note \cite{grohe2026rankpreservinggaifmannormalform} derives further locality results (Theorems 7.1 and  8.2), which have no analogue in our note.
    \item The proof in \cite{grohe2026rankpreservinggaifmannormalform} is based on Ehrenfeucht-Fra\"iss\'e games and types, similarly to our original approach in \cite{mw-arxiv-v1}. The present proof is based on a syntactic rewriting of formulas, which we find more direct, transparent, and concise. The two approaches are ultimately equivalent, but 
    we believe that the game-based approach is more prone to errors (as have been spotted in \cite{mw-arxiv-v1,gs-pods}), since tracing invariants for games turns out to be often subtle.

\item Finally, the syntactic rewriting approach is algorithmic, and allows for an effective procedure to compute the boolean combination of local formulas and scatter sentences. 
    \cite{grohe2026rankpreservinggaifmannormalform} are also able to derive -- a posteriori -- the existence of an algorithm computing the equivalent boolean combination, by appealing to G\"odel's completeness theorem for first-order logic. This approach does not provide any upper bound on the duration of the algorithm. Our syntactic approach provides an explicit algorithm whose running time can in principle be analyzed and upper bounded by an explicit computable function (which is enormous but elementary for fixed $q$ and $k$), although we refrain from doing so in this note.
\end{enumerate}

\paragraph{Organization of the paper.}
In \Cref{sec:preliminaries} we introduce the notions needed to state the locality theorem, \Cref{thm:localglobal}.
We derive \Cref{cor:normalform} from it, which gives a rank-preserving normal form for \distFO formulas. In \Cref{sec:proof} we prove the locality theorem.

\section{Preliminaries}\label{sec:preliminaries}
 $\N$ denotes the set of non-negative integers. For $n\in\N$, denote $[n]\eqdef\set{1,\ldots,n}$.

We consider finite relational signatures only.
A structure $A$ over the signature $\sigma$,
 or a \emph{$\sigma$-structure}, consists of a (possibly infinite) domain $V(A)$ and the interpretation
  $R^A\subset V(A)^k$ of each relation symbol $R\in \sigma$ of arity $k$.
The \emph{Gaifman graph} of a structure $A$ is the graph with vertices $V(A)$ and edges connecting pairs of distinct vertices which occur together in some tuple of some relation of $A$. By $\dist^A(\cdot,\cdot)$ we denote the shortest path metric  in the Gaifman graph of $A$, and for $r\in\N$ and $u\in V(A)$ denote $N^A_r(u)\coloneqq\setof{v\in V(A)}{\dist^A(u,v)\le r}$.




\subsection{Distance logic and scatter sentences}\label{sec:typedefs}

\paragraph{First-order distance logic.}
We define a logic called \distFO that extends FO by the following \emph{distance atoms}.
\begin{quote}
     For each radius \(r \in \N\), \distFO introduces
        a \emph{binary distance atom} \(\dist(x,y) \le r\) expressing that the distance between \(x\) and \(y\) in the Gaifman graph is at most \(r\).
        We call \(r\) the \emph{radius} of the distance atom.

     For each radius \(r \in \N\) and unary relation symbol \(Y\), \distFO introduces
        a \emph{unary distance atom} \(\dist(x,Y) < r\) expressing that the distance from \(x\) to \(Y\) in the Gaifman graph is smaller than \(r\);
        equivalently, \(\exists y~Y(y) \land \dist(x,y) < r\).
        We call \(r\) the \emph{radius} of the distance atom.
\end{quote}
Formulas of \distFO are built inductively,
starting with usual atomic formulas of first-order logic (relation symbols or equality applied to variables), distance atoms, boolean connectives, and existential quantification (universal quantification is then expressible using negation).

For a \distFO formula $\phi(x_1,\ldots,x_k)$, structure $A$, 
and  tuple $(a_1,\ldots,a_k)\in V(A)^k$
we write $A\models\phi(a_1,\ldots,a_k)$
to denote
that $\phi(a_1,\ldots,a_k)$ holds in $A$, which is defined by induction on the structure of $\phi$, in the expected way.

For a set $\bar x$ of variables, 
we write $\phi(\bar x)$ to indicate that $\bar x$ contains the free variables of the formula $\phi$.
We write \(\dist(\bar x,y) \le r\) as shorthand for \(\bigvee_{x \in \bar x}\dist(x,y) \le r\), and
\(\dist(\bar x,\bar y) \le r\) as shorthand for
\(\bigvee_{x\in \bar x, y \in \bar y}\dist(x,y) \le r\).

Note that binary distance atoms can be expressed
using usual first-order formulas (assuming~$\sigma$ contains finitely many relation symbols of arity ${>}1$).
However, we treat them as primitive atomic formulas, as
we will need to have a fine control over the quantifier rank of the constructed formulas. For this reason, the possibility to measure distances using quantifier-free formulas is essential, similarly as in \cite{gks}.

\paragraph{Horizon functions.}
Fix two \emph{horizon functions} $\rho^-,\rho^+\from \N^2\to \N_{\ge 1}$ satisfying the following properties for all $k,q\in\N$:
\begin{equation}\label{eq:main-property-rho}
\begin{aligned}
    \rho^-(k,q)&\ge \rho^+(k+1,q-1)+\rho^-(k+1,q-1)
    &&\quad\text{for $q\ge 1$},\\
    \rho^+(k,q)&\ge 9^k \cdot \rho^-(k,q).
\end{aligned}
\end{equation}
For instance, it is easy to check that the following
functions satisfy \eqref{eq:main-property-rho}:
\begin{align*}
   \rho^-(k,q) \eqdef 9^{(k+q+1)q} \quad\quad\quad\text{and}\quad\quad\quad
   \rho^+(k,q) \eqdef 9^{(k+q)(q+1)}.
\end{align*}
The functions \(\rho^-\) and \(\rho^+\) will be used to define the allowed radius of distance atoms in \distFO formulas and scatter sentences of a given \emph{distance rank} (defined below), as well as the radius at which \emph{local quantification} is allowed.

\paragraph{Distance rank.}
Let \(k,q \in \N\).
We define \emph{\distFO formulas of distance rank} \((k,q)\) by induction on~$q$,
as formulas with at most $k$ free variables and
quantifier rank at most $q$ which are boolean combinations of:
\begin{itemize}
    \item atoms of first-order logic (relation symbols in $\sigma\cup\set{=}$ applied to variables),
    \item unary and binary distance atoms with radius  \(r\le \rho^-(k,q)\),
    \item if $q\ge 1$, formulas
        \[
            \exists y~\phi(\bar x y)
            \quad\quad\text{or}\quad\quad
            \exists y~\bigl(\dist(\bar x,y) \le \rho^+(k+1,q-1)\bigr) \land \phi(\bar x y)
        \]
        where \(\phi(\bar x y)\) has distance rank \((k+1,q-1)\).
\end{itemize}
A \distFO formula is \emph{local}
if it never uses the \emph{unrestricted quantification} $\exists y~\phi$
(only \emph{local quantification}, that is, of the latter form above, is allowed).

\medskip

Property \eqref{eq:main-property-rho} implies \(\rho^-(k+1,q-1)\le \rho^+(k+1,q-1) < \rho^-(k,q)\) for all \(k,q \in \N\) with $q\ge 1$. Hence, the distance atoms have progressively shorter radii as one traverses the quantifiers of a \distFO formula inwards. It moreover implies:

\begin{observation}\label{obs:preservedistrank}
    Every \distFO formula of distance rank \((k+1,q-1)\) with at most \(k\) free variables also has distance rank \((k,q)\).
\end{observation}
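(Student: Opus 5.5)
We proceed by induction on $q$, simultaneously checking the two constraints in the definition of distance rank for a formula $\phi$ of distance rank $(k+1,q-1)$ with at most $k$ free variables. Recall that a formula of rank $(k+1,q-1)$ is a boolean combination of atoms, distance atoms of radius at most $\rho^-(k+1,q-1)$, and (if $q-1\ge 1$) quantified formulas $\exists y\,\psi$ or $\exists y\,(\dist(\bar x,y)\le \rho^+(k+2,q-2))\land\psi$ with $\psi$ of rank $(k+2,q-2)$. We need to show each such building block is also a legitimate building block for rank $(k,q)$, and that the free-variable and quantifier-rank bounds hold. The latter are immediate: $\phi$ has at most $k$ free variables by assumption, and quantifier rank at most $q-1\le q$.

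For the atoms there is nothing to check. For distance atoms, the radius is at most $\rho^-(k+1,q-1)\le \rho^+(k+1,q-1)<\rho^-(k,q)$ by \eqref{eq:main-property-rho}, as noted just before the observation. So they are allowed in rank $(k,q)$.

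The main step is the quantified subformulas, and here is the subtlety. A subformula $\exists y\,\psi$ occurring in $\phi$ may have up to $k+1$ free variables, not $k$, because the $k$-bound on $\phi$ does not propagate to subformulas. Therefore we cannot simply apply the induction hypothesis to it. Instead we treat $\phi$ itself directly: $\phi$ trivially is a boolean combination of itself, and we would like to present $\phi$ as $\exists$-free at top level. The cleanest route is to strengthen the statement. We prove by induction on $q$ that every formula of rank $(k+1,q-1)$, with free variables $\bar x$ where $|\bar x|\le k+1$, satisfies the following: if $|\bar x|\le k$ it has rank $(k,q)$. We handle the top-level quantified blocks as follows. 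Take a block $\exists y\,\psi$ whose free variables lie among those of $\phi$, at most $k$ of them. Then $\psi$ has at most $k+1$ free variables and rank $(k+2,q-2)$. By induction (applied with $k+1$ in place of $k$ and $q-1$ in place of $q$), $\psi$ has rank $(k+1,q-1)$, so $\exists y\,\psi$ is an allowed block at rank $(k,q)$.

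For the local form, the guard radius $\rho^+(k+2,q-2)$ must be replaced by the required $\rho^+(k+1,q-1)$. These differ, so the block is not syntactically of the right shape. We expect this to be the main obstacle. The fix is to view the guard $\dist(\bar x,y)\le\rho^+(k+2,q-2)$ as a distance atom inside the body. Its radius satisfies $\rho^+(k+2,q-2)<\rho^-(k+1,q-1)$, so it is allowed in rank $(k+1,q-1)$. Hence the whole block is $\exists y\,\chi$ with $\chi=(\dist(\bar x,y)\le\rho^+(k+2,q-2))\land\psi$ of rank $(k+1,q-1)$, which is an unrestricted quantification at rank $(k,q)$. If locality is to be preserved, we instead use the form $\exists y\,(\dist(\bar x,y)\le \rho^+(k+1,q-1))\land\chi$. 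This is equivalent to the original, since $\rho^+(k+2,q-2)\le\rho^+(k+1,q-1)$ is implied by \eqref{eq:main-property-rho}. This gives the claim, together with the fact that local formulas map to local formulas.
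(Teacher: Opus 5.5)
Your argument is correct and is essentially the induction on $q$ that the paper leaves implicit behind the inequality chain $\rho^-(k+1,q-1)\le\rho^+(k+1,q-1)<\rho^-(k,q)$. In particular, your handling of the guard-radius mismatch is right: you read the guard $\dist(\bar x,y)\le\rho^+(k+2,q-2)$ as a distance atom allowed at rank $(k+1,q-1)$ inside an unrestricted block, and, when locality matters, you re-guard with radius $\rho^+(k+1,q-1)$ to obtain an equivalent local formula. One small clean-up: the top-level blocks of the boolean combination automatically have their free variables among those of $\phi$, so your ``strengthened'' statement is just the original one, and plain induction on $q$ (for all $k$ simultaneously) is all that is needed.
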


\paragraph{Semantic locality.}
We first prove a property of local \distFO formulas, which states that their truth value is determined by a bounded-radius neighborhood of the free variables. We say that a formula $\phi(\bar x)$ is \emph{semantically $r$-local}
if for every structure $A$ and tuple $\bar a\in A^{\bar x}$,
we have that
$$A\models \phi(\bar a) \quad\Longleftrightarrow\quad A[N_r(\bar a)]\models \phi(\bar a).$$

\begin{lemma}\label{lem:ltp-local-subgraph}
    Let \(k,q \in \N\) with $k\ge 1$, and let
    \(\phi(\bar x)\) be a local \distFO formula of distance rank \((k,q)\).
    Then $\phi$ is semantically \(\rho^-(k,q)\)-local.
\end{lemma}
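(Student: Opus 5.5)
We prove a slightly stronger statement, by induction on $q$, so that the induction hypothesis can be applied to substructures that are larger than the exact neighborhood. \emph{Claim:} for all $k,q$, every local \distFO formula $\phi(\bar x)$ of distance rank $(k,q)$, every structure $A$, every $\bar a\in A^{\bar x}$ and every set $S$ with $N^A_{\rho^-(k,q)}(\bar a)\subseteq S\subseteq V(A)$, we have $A\models\phi(\bar a)\iff A[S]\models\phi(\bar a)$. Taking $S=N^A_{\rho^-(k,q)}(\bar a)$ gives the lemma.

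\textbf{Basic fact.} Let $B=A[S]$. Every tuple of $A$ lying entirely inside $S$ survives in $B$, so $\dist^B\ge\dist^A$ on $S$. Conversely, let $a\in\bar a$ and $v\in S$ with $\dist^A(a,v)\le r\le\rho^-(k,q)$. Every vertex on a shortest $a$--$v$ path in the Gaifman graph of $A$ is within distance $r$ of $a$, hence lies in $S$. So the whole path survives in $B$ and $\dist^B(a,v)=\dist^A(a,v)$.

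\textbf{Atoms and connectives.} Relation atoms and equalities on $\bar a$ are clearly preserved, since $B$ is an induced substructure. For a binary distance atom $\dist(x_i,x_j)\le r$ with $r\le\rho^-(k,q)$, the basic fact (with $v=a_j$) shows that its truth value is the same in $A$ and $B$. For a unary distance atom $\dist(x_i,Y)<r$, a witness $y\in Y^A$ with $\dist^A(a_i,y)<r$ lies in $S$, and the connecting path survives. Since $Y^B=Y^A\cap S$ and $\dist^B\ge\dist^A$, the converse also holds. Boolean combinations are immediate. If $\bar a$ is empty, all local quantifiers are vacuously false and there are no distance atoms over free variables, so nothing needs to be checked.

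\textbf{Local quantification.} This is the inductive step. Consider $\exists y\,(\dist(\bar x,y)\le\rho^+)\land\psi(\bar x y)$, where $\rho^+=\rho^+(k+1,q-1)$ and $\psi$ is local of rank $(k+1,q-1)$.

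Suppose $b$ is a witness in $A$. Then $\dist^A(\bar a,b)\le\rho^+<\rho^-(k,q)$, so $b\in S$, and by the basic fact the guard also holds in $B$. By the triangle inequality and \eqref{eq:main-property-rho},
\[
N^A_{\rho^-(k+1,q-1)}(\bar a b)\subseteq N^A_{\rho^++\rho^-(k+1,q-1)}(\bar a)\subseteq N^A_{\rho^-(k,q)}(\bar a)\subseteq S .
\]
Hence the induction hypothesis applies to $\psi$, the tuple $\bar a b$ and the same set $S$, and it gives $B\models\psi(\bar a b)$.

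Conversely, suppose $b\in S$ is a witness in $B$. Since $\dist^A\le\dist^B$, the guard holds in $A$. The same inclusion then holds, so the induction hypothesis again transfers $\psi$.

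\textbf{Main obstacle.} The delicate point is that distances are recomputed inside the substructure, so we need both directions of the distance comparison. The inequality $\dist^A\le\dist^B$ is trivial. For the reverse, the key observation is that shortest paths from $\bar a$ of length at most $\rho^-(k,q)$ stay inside $S$. This is exactly why the claim is stated for all $S\supseteq N_{\rho^-(k,q)}(\bar a)$: it lets the induction hypothesis be reused with the same $S$, at the larger tuple $\bar a b$, rather than at a smaller neighborhood.
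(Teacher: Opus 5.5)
Your proof is correct and follows the same idea as the paper's. The paper unrolls the nesting of local quantifiers and uses the telescoping bound $\sum_{h=1}^{j}\rho^+(k+h,q-h)+\rho^-(k+j,q-j)\le\rho^-(k,q)$. Your induction instead fixes a superset $S\supseteq N^A_{\rho^-(k,q)}(\bar a)$ and always takes neighborhoods in $A$, so it needs only the one-step inequality $\rho^+(k+1,q-1)+\rho^-(k+1,q-1)\le\rho^-(k,q)$, and it makes rigorous the comparison of $\dist^A$ and $\dist^{A[S]}$ that the paper leaves implicit.

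One detail is missing from your basic fact. For relations of arity greater than $2$, an edge $u_iu_{i+1}$ of the shortest path survives in $A[S]$ only if an entire tuple witnessing it lies in $S$; having the path's vertices in $S$ is not enough. This does hold here, because every entry of such a tuple is within distance $i+1\le r$ of $a$.
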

\begin{proof}
    We use the following consequence of \eqref{eq:main-property-rho}. For every
    \(j\in\{0,\ldots,q\}\),
    \[
        \sum_{h=1}^{j}\rho^+(k+h,q-h)+\rho^-(k+j,q-j)
        \le \rho^-(k,q),
    \]
    where the case \(j=0\) is trivial, and the case \(j>0\) follows by applying
    \[
        \rho^+(k+h,q-h)
        \le
        \rho^-(k+h-1,q-h+1)-\rho^-(k+h,q-h)
    \]
    for \(h=1,\ldots,j\) and telescoping.

    Now fix a structure \(A\) and a tuple \(\bar a\in A^{\bar x}\). If, during
    the evaluation of \(\phi(\bar a)\), a variable is introduced after \(j\)
    local quantifiers, then it is at distance at most
    \(\sum_{h=1}^{j}\rho^+(k+h,q-h)\) from \(\bar a\). Any distance atom in the
    scope of these \(j\) quantifiers has radius at most \(\rho^-(k+j,q-j)\).
    By the displayed bound, all vertices that can affect the truth of such an atom
    lie in \(N_{\rho^-(k,q)}^A(\bar a)\). Hence removing all vertices outside this
    neighborhood does not change the truth value of \(\phi(\bar a)\).
\end{proof}

\paragraph{Scatter sentences.}
A set \(S\) of vertices of a structure $A$ is \emph{\(r\)-scattered} if all vertices in \(S\)
have pairwise distance larger than \(r\) in the Gaifman graph of $A$. (Thus, a \(1\)-scattered set is an independent set.)

For every structure \(A\), radius \(r\), formula \(\beta(x)\)
we fix arbitrarily and once and for all
fix a value $s_{A,r,\beta}\in\N\cup\set{\infty}$ such that there is some inclusion-wise maximal \(r\)-scattered subset of \(\{ a \in V(A) \mid A \models \beta(a)\}\) of size $s_{A,r,\beta}$; the value $\infty$ indicates that there exist such sets of arbitrarily large finite size. 

    Two possible choices of $s_{A,r,\beta}$ are described below. The particular choice is not important for the statement of \Cref{thm:localglobal}, but it is relevant in applications of the result, as it might affect the efficiency of evaluation of scatter sentences.

Our locality theorem relies on the fixed values $s_{A,r,\beta}$.
To incorporate it into logical statements, we define logical \emph{scatter sentences}
\(\textbf{scatter}(r,\beta(x),t)\) with \(A \models \textbf{scatter}(r,\beta(x),t)\) if and only if $s_{A,r,\beta}\ge t$.
We define \emph{scatter sentences of distance rank \((k,q)\)}, for \(q\ge 1\),
as all scatter sentences \(\mathbf{scatter}(r,\beta(x),t)\) such that
\(t\le k+q\) and
for some \(i\in\{1,\ldots,q\}\),
\(\beta(x)\) is a local \(\distFO\) formula of distance rank \((k+i,q-i)\), and 
\begin{align}\label{eq:scatter-radius}
4\rho^-(k+i,q-i)\le r\le 9^{k+i}\rho^-(k+i,q-i).
\end{align}
Note that
$9^{k+i}\rho^-(k+i,q-i)\le \rho^-(k,q)$, which follows from \eqref{eq:main-property-rho}: first
\[
    9^{k+i}\rho^-(k+i,q-i)
    \le \rho^+(k+i,q-i)
    \le \rho^-(k+i-1,q-i+1),
\]
and then we iterate \(\rho^-(a,b)\le \rho^+(a,b)\le \rho^-(a-1,b+1)\).
Moreover, by \Cref{lem:ltp-local-subgraph}, the formula \(\beta(x)\) as above
semantically \(\rho^-(k+i,q-i)\)-local, and hence semantically \(r/4\)-local.

We have the following.

\begin{observation}\label{obs:preservescatterrank}
    Every scatter sentence of distance rank \((k+1,q-1)\) is also a scatter sentence of distance rank \((k,q)\).
\end{observation}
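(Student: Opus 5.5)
The claim is that every scatter sentence $\mathbf{scatter}(r,\beta(x),t)$ of distance rank $(k+1,q-1)$ is also one of distance rank $(k,q)$. I would check the three defining conditions directly against the definition with $(k,q)$ in place of $(k+1,q-1)$, using one index shift. Write $k'=k+1$ and $q'=q-1$. Since the sentence has distance rank $(k',q')$, we have $q'\ge 1$, $t\le k'+q'$, and there is some $i'\in\{1,\ldots,q'\}$ such that $\beta(x)$ is a local \distFO formula of distance rank $(k'+i',q'-i')$ and
\[
4\rho^-(k'+i',q'-i')\le r\le 9^{k'+i'}\rho^-(k'+i',q'-i').
\]

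The plan is to take $i\eqdef i'+1$ as the witness for distance rank $(k,q)$.

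First, the bound on $t$: $k+q=(k+1)+(q-1)=k'+q'$, so $t\le k+q$ holds unchanged.

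Second, the range of the index. Since $1\le i'\le q'=q-1$, we get $2\le i\le q$, so $i\in\{1,\ldots,q\}$. Also $q\ge 1$ holds because $q=q'+1\ge 2$.

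Third, the local formula and the radius. We have $k+i=k+1+i'=k'+i'$ and $q-i=q-1-i'=q'-i'$. So the pair $(k+i,q-i)$ is literally the pair $(k'+i',q'-i')$. Hence $\beta$ is a local \distFO formula of distance rank $(k+i,q-i)$. The inequality \eqref{eq:scatter-radius} for index $i$ is exactly the given inequality for $i'$, because both $\rho^-$ and the exponent $9^{k+i}$ depend only on $(k+i,q-i)$.

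No step presents a real obstacle; the argument is pure bookkeeping. The only point needing care is that the admissible index range shifts from $\{1,\ldots,q-1\}$ to $\{2,\ldots,q\}$, which sits inside $\{1,\ldots,q\}$. Monotonicity of the horizon functions is not needed, since the relevant rank pairs coincide exactly after re-indexing.
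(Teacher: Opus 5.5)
Your proposal is correct: taking $i \eqdef i'+1$ makes $(k+i,q-i)$ coincide with $(k'+i',q'-i')$, so the locality, rank, and radius conditions carry over verbatim and $t\le k+q$ is unchanged. The paper states this observation without proof, and your re-indexing is exactly the routine check it leaves implicit.
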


We now describe two possible choices of $s_{A,r,\beta}$, each having its own benefits. The first one is more algorithmic, while the second one is more logical.

\begin{description}
    \item[Greedy choice]
    For finite structures $A$, an inclusion-wise maximal subset of \(X\coloneqq\{ a \in V(A) \mid A \models \beta(a)\}\) can be constructed by a simple greedy process as follows:
Assuming \(v_1,\dots,v_i\) have already been chosen, the process selects \(v_{i+1}\) as the next element of $X$ (with respect to some arbitrary, fixed order on \(V(A)\))
 and is of distance larger than \(r\) in the Gaifman graph of $A$ from \(v_1,\dots,v_i\).
If no such \(v_{i+1}\) can be found, the process terminates and we set $s_{A,r,\beta} \coloneqq i$.
The benefit of this choice of $s_{A,r,\beta}$ is that it can be efficiently computed, given $A$, an order on $V(A)$, and $X$.
    \item[Maximum size] 
     Another way to define \(s_{A,r,\beta}\) is to set it to the maximum size of an \(r\)-scattered subset of \(\{ a \in V(A) \mid A \models \beta(a)\}\) and to $\infty$ if arbitrarily large such sets exist. The benefit of this definition is that it does not rely on the choice of an arbitrary order on $V(A)$, and for each $t\in\N$, the  property $s_{A,r,\beta}\ge t$ can be expressed in first-order logic (for fixed $r$ and $t$), by the following sentence: 
     \begin{align}\label{eq:scatt-sent}
        \exists x_1,\ldots,x_t\,
\bigwedge_{i\neq j}\left(\dist(x_i,x_j)>r\right)
\ \land\
\bigwedge_{i=1}^t \beta(x_i).
     \end{align}

\end{description}

\paragraph{Locality theorem.}
We have now stated all the notions occurring in \Cref{thm:localglobal}, repeated below for convenience.
\localglobalgame*

As a corollary of \Cref{thm:localglobal}, we obtain a normal form for \distFO formulas, which is an analogue of Gaifman's normal form for first-order logic \cite{gaifman}. 
The proof is a straightforward application of \Cref{thm:localglobal} and the definition of scatter sentences.

\begin{corollary}[Normal form for \distFO formulas]\label{cor:normalform}
Fix \(k,q\in\N\) with \(q\ge 1\). Every \(\distFO\) formula
\(\phi(\bar x)\) of distance rank \((k,q)\) is equivalent to a boolean
combination of local \(\distFO\) formulas of distance rank \((k,q)\)
and sentences of the form
\[
\exists x_1,\ldots,x_t\,
\bigwedge_{i\neq j}\left(\dist(x_i,x_j)>r\right)
\ \land\
\bigwedge_{i=1}^t \beta(x_i).
\]
where \(t\le k+q\), the formula
\(\beta(x)\) is a local \(\distFO\) formula of distance rank
\((k+1,q-1)\), the radius \(r\) satisfies \(r\le \rho^-(k,q)\),
and \(\beta(x)\) is semantically \(r/4\)-local. This boolean combination can be effectively computed, given
\(k,q\), and~\(\phi\).
\end{corollary}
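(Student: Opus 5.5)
We apply \Cref{thm:localglobal} to $\phi(\bar x)$ and then rewrite each scatter sentence in the resulting boolean combination into the required existential form. The key point is that \Cref{thm:localglobal} holds for \emph{every} fixed choice of the values $s_{A,r,\beta}$. So we may instantiate it with the \textbf{maximum size} choice. The theorem yields, effectively, a boolean combination of local \distFO formulas of distance rank $(k,q)$ and scatter sentences $\mathbf{scatter}(r,\beta(x),t)$ of distance rank $(k,q)$. With the maximum-size choice, each such scatter sentence is equivalent on every structure $A$ to the sentence \eqref{eq:scatt-sent}.

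We check this equivalence carefully, since structures may be infinite. If $s_{A,r,\beta}=\infty$, there are $r$-scattered subsets of $\beta(A)$ of arbitrarily large finite size, so one of size $t$ exists. If $s_{A,r,\beta}$ is finite, it is by definition the maximum size of an $r$-scattered subset of $\beta(A)$. Since any subset of an $r$-scattered set is again $r$-scattered, such a set of size $t$ exists iff $s_{A,r,\beta}\ge t$. In both cases $A\models\mathbf{scatter}(r,\beta,t)$ holds iff \eqref{eq:scatt-sent} holds. Replacing each scatter sentence by \eqref{eq:scatt-sent} is syntactic, so the computation stays effective.

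It remains to verify the side conditions from the definition of scatter sentences of distance rank $(k,q)$, which holds for some $i\in\{1,\dots,q\}$.
\begin{itemize}
\item The bound $t\le k+q$ is given directly.
\item The formula $\beta(x)$ is local of distance rank $(k+i,q-i)$ and has one free variable. Applying \Cref{obs:preservedistrank} $i-1$ times (each step needs at most $k+i-1\ge 1$ free variables, which holds since $\beta$ has one) shows that $\beta$ has distance rank $(k+1,q-1)$. This observation only rearranges the rank bookkeeping and introduces no unrestricted quantifiers, so $\beta$ stays local.
\item The radius satisfies $r\le 9^{k+i}\rho^-(k+i,q-i)\le\rho^-(k,q)$, by the inequality derived right after \eqref{eq:scatter-radius}.
\item Semantic $r/4$-locality of $\beta$ follows from \Cref{lem:ltp-local-subgraph} (applicable since $k+i\ge 1$) together with $4\rho^-(k+i,q-i)\le r$.
\end{itemize}

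The only mildly delicate step is justifying that \Cref{thm:localglobal} may be applied with the maximum-size choice in the presence of infinite structures, where the value $\infty$ must be handled. The argument in the second paragraph settles this. Everything else is bookkeeping with the inequalities already established in \Cref{sec:preliminaries}.
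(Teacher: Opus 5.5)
Your proposal is correct and follows the paper's own proof: you instantiate the scatter values with the maximum-size choice, apply \Cref{thm:localglobal}, replace each scatter sentence by \eqref{eq:scatt-sent}, and verify the parameters using the inequality after \eqref{eq:scatter-radius}, repeated use of \Cref{obs:preservedistrank}, and \Cref{lem:ltp-local-subgraph}. Your explicit treatment of the value $\infty$ and of the free-variable hypothesis in \Cref{obs:preservedistrank} only spells out steps the paper leaves implicit.
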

\begin{proof}
    In the definition of scatter sentences, choose \(s_{A,r,\beta}\) to be the
    maximum size of an \(r\)-scattered subset of
    \(\{a\in V(A)\mid A\models \beta(a)\}\), as in the ``Maximum size'' choice
    above. With this choice, for every \(r,t\in\N\) and every formula
    \(\beta(x)\), the scatter sentence \(\textbf{scatter}(r,\beta(x),t)\) is
    equivalent to the sentence \eqref{eq:scatt-sent}.

    Apply \Cref{thm:localglobal} to \(\phi(\bar x)\). We obtain an effectively
    computed boolean combination of local \(\distFO\) formulas of distance rank
    \((k,q)\) and scatter sentences of distance rank \((k,q)\). Replace each
    scatter sentence \(\textbf{scatter}(r,\beta(x),t)\) in this boolean
    combination by the equivalent  sentence~\eqref{eq:scatt-sent}.

    It remains only to check that the resulting sentences have the parameters
    stated in the corollary. Since the scatter sentence has distance rank
    \((k,q)\), by definition \(t\le k+q\). The same definition also gives an
    auxiliary \(i\in\{1,\ldots,q\}\) such that \(\beta(x)\) is local of distance
    rank \((k+i,q-i)\) and
    \[
        4\rho^-(k+i,q-i)\le r\le
        9^{k+i}\rho^-(k+i,q-i)\le \rho^-(k,q).
    \]
    This inequality gives the required bound \(r\le \rho^-(k,q)\).
    Repeatedly applying \Cref{obs:preservedistrank} shows that the same formula \(\beta(x)\) has distance
    rank \((k+1,q-1)\). Finally, \Cref{lem:ltp-local-subgraph} says that
    \(\beta(x)\) is semantically \(\rho^-(k+i,q-i)\)-local, hence semantically
    \(r/4\)-local by the left inequality above.
\end{proof}


\section{Proof of the locality theorem}\label{sec:proof}

\subsection{Separation lemma}

\begin{lemma}\label{lem:far-ltp}
    Fix $k,q\in\N$ and let $\phi(\tup x,\tup y)$ be a local \distFO formula of distance rank $(k,q)$.
    The formula
    $$\Big(\dist(\tup x,\tup y)>\rho^-(k,q)\Big) \ \land\ \phi(\tup x,\tup y)$$
    is equivalent to a disjunction of formulas of the form
    $$\Big(\dist(\tup x,\tup y)>\rho^-(k,q)\Big) \ \land\  \alpha(\tup x)\land \beta(\tup y),$$
    where $\alpha$ and $\beta$ are local \distFO formulas of distance rank $(k,q)$.
    Moreover, this disjunction can be effectively computed, given $k,q$, and $\phi$.
\end{lemma}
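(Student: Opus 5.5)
Induction on $q$, proving the statement for all $k$ simultaneously and all splittings of the free variables into $\bar x,\bar y$ (with $|\bar x\bar y|\le k$). Write $\delta(\bar x,\bar y)$ for $\dist(\bar x,\bar y)>\rho^-(k,q)$. Since conjunction distributes over disjunction, it suffices to handle the building blocks of $\phi$. Put $\phi$ into disjunctive normal form over its building blocks (atoms, distance atoms, local quantifications $\exists z\,(\dist(\bar x\bar y,z)\le \rho^+(k+1,q-1))\land\psi$). I will show that under $\delta$ each building block $\theta$ and its negation is equivalent to a formula $\alpha(\bar x)\land\beta(\bar y)$, or to a disjunction of such, with $\alpha,\beta$ local of distance rank $(k,q)$. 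Then conjunctions of such disjunctions multiply out into disjunctions of the same form, using that local formulas of rank $(k,q)$ are closed under conjunction. Negations cause no trouble, because the DNF is taken over literals and each literal is handled directly.

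\textbf{Atoms.} An atom mentioning only $\bar x$, or only $\bar y$, or neither, is already of the form $\alpha(\bar x)\land\beta(\bar y)$. A relational atom or equality mentioning both some $x\in\bar x$ and $y\in\bar y$ implies $\dist(x,y)\le 1\le\rho^-(k,q)$, so under $\delta$ it is false; its negation is true. A binary distance atom $\dist(x,y)\le r$ with $r\le\rho^-(k,q)$ behaves the same way. Unary distance atoms $\dist(x,Y)<r$ involve one variable and are fine.

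\textbf{Local quantification (main step).} Consider $\theta=\exists z\,\bigl(\dist(\bar x\bar y,z)\le R\bigr)\land\psi(\bar x\bar y z)$ with $R=\rho^+(k+1,q-1)$ and $\psi$ local of distance rank $(k+1,q-1)$. Split the disjunction $\dist(\bar x\bar y,z)\le R$ into the case $\dist(\bar x,z)\le R$ and the case $\dist(\bar y,z)\le R$. In the first case, $\delta$ together with the triangle inequality gives
\[
\dist(\bar x z,\bar y)>\rho^-(k,q)-R\ge\rho^-(k+1,q-1),
\]
by \eqref{eq:main-property-rho}. Now apply the induction hypothesis to $\psi$ with the split $(\bar x z,\bar y)$ at rank $(k+1,q-1)$. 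This rewrites $\delta'\land\psi$, where $\delta'$ is $\dist(\bar x z,\bar y)>\rho^-(k+1,q-1)$, as a disjunction of conjunctions $\delta'\land\alpha_i(\bar x z)\land\beta_i(\bar y)$. Since $\delta'$ follows from $\delta$ and the case assumption, the first case becomes
\[
\bigvee_i \Bigl(\exists z\,\bigl(\dist(\bar x,z)\le R\bigr)\land\alpha_i(\bar x z)\Bigr)\land\beta_i(\bar y).
\]
The quantified part is a local quantification of rank $(k,q)$ over $\bar x$; the ball is around $\bar x$ only, which I take to be allowed, since locality is measured from the variables present. By Observation~\ref{obs:preservedistrank}, $\beta_i$ has rank $(k,q)$. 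The second case is symmetric.

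\textbf{Negated quantifiers (the main obstacle).} The negation $\neg\theta$ becomes, under $\delta$, a negation of a disjunction of products. Such a negation is not directly a disjunction of products, so it needs care. Under $\delta$, $\theta$ is a finite disjunction $\bigvee_i a_i(\bar x)\land b_i(\bar y)$. The standard trick applies: over all subsets $I$ of indices,
\[
\neg\bigvee_i(a_i\land b_i)\;\equiv\;\bigvee_I\Bigl(\bigwedge_{i\in I}\neg a_i\Bigr)\land\Bigl(\bigwedge_{i\notin I}\neg b_i\Bigr).
\]
This is again a disjunction of products of local rank-$(k,q)$ formulas, since these are closed under boolean combinations. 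Finally, effectivity is clear, because every step is an explicit syntactic transformation.
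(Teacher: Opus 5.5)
Your proof is correct. Its architecture is the paper's: induction with trivial atomic cases, negation handled via $\neg\bigvee_i(\alpha_i\land\beta_i)\equiv\bigwedge_i(\neg\alpha_i\lor\neg\beta_i)$ and distribution, and the local quantifier handled by grouping $z$ with the side it is close to and invoking the inductive hypothesis with the split $(\bar x z;\bar y)$. Your remark that a guard centred on $\bar x$ alone is legitimate matches what the paper itself does in its case 1.

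The one real difference is how you decompose the quantifier step. The paper splits the range of $z$ into three regions:
\begin{itemize}
\item within $\rho^-(k+1,q-1)$ of $\bar x$;
\item within $\rho^-(k+1,q-1)$ of $\bar y$;
\item the annulus $\rho^-(k+1,q-1)<\dist(z,\bar x\bar y)\le\rho^+(k+1,q-1)$.
\end{itemize}
In the annulus it separates $z$ from all of $\bar x\bar y$. It then applies the inductive hypothesis a second time to the resulting $\alpha(\bar x\bar y)$, and uses the equivalence \eqref{eq:eq1} to move the annulus condition onto one side.

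You instead note that the first condition of \eqref{eq:main-property-rho}, read as $\rho^-(k,q)-\rho^+(k+1,q-1)\ge\rho^-(k+1,q-1)$, already gives $\dist(z,\bar y)>\rho^-(k+1,q-1)$ for every $z$ within $\rho^+(k+1,q-1)$ of $\bar x$. So the two cases $\dist(\bar x,z)\le\rho^+(k+1,q-1)$ and $\dist(\bar y,z)\le\rho^+(k+1,q-1)$ suffice, and the annulus case disappears.

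Your route is shorter: it needs one application of the inductive hypothesis per case and no appeal to \eqref{eq:eq1}. By contrast, the paper's case 1 derives $\dist(z,\bar y)>\rho^+(k+1,q-1)$, which is stronger than the inductive hypothesis requires. What the paper's annulus case buys is only extra structure: there the quantified body depends on $z$ alone, which this lemma does not need.
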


\begin{proof}
 We prove that assuming the background axiom
 \begin{align}\label{eq:far-ass}
\dist(\tup x,\tup y)>\rho^-(k,q),
 \end{align}
 the formula
 $\phi(\tup x;\tup y)$
is equivalent to a disjunction of
formulas of the form $\alpha(\tup x)\land \beta(\tup y)$, where
$\alpha$ and $\beta$ are local \distFO formulas of distance rank $(k,q)$.
The only nontrivial properties of the functions \(\rho^-\), \(\rho^+\) we will use are
\(\rho^+(k,q)\ge \rho^-(k,q)\ge 1\), which follows from the second condition in
\eqref{eq:main-property-rho}, and, for $q\ge 1$,
\begin{equation}\label{eq:prop-of-rho}
    \rho^-(k,q)-\rho^-(k+1,q-1) \ge \rho^+(k+1,q-1),
\end{equation}
which is precisely the first condition in \eqref{eq:main-property-rho}.

We proceed by structural induction on~$\phi$.

In the base case, $\phi(\bar x,\bar y)$
is an atomic formula, a binary distance atom with radius up to $\rho^-(k,q)$, or a unary distance atom with radius up to $\rho^-(k,q)$.
Unary distance atoms involve only one variable, hence they belong entirely to the $\tup x$-side or to the $\tup y$-side.
The same is true for first-order atoms and binary distance atoms which involve only variables from $\tup x$, or only variables from $\tup y$.
Otherwise, $\phi(\bar x,\bar y)$ evaluates to false, by \eqref{eq:far-ass} and $\rho^-(k,q)\ge 1$.

\newcommand{\far}{\mathrm{far}}
\newcommand{\near}{\mathrm{near}}
Consider the induction step. If $\phi$ is a a disjunction of two simpler formulas, the statement follows trivially by induction.
If $\phi$ is the negation of a simpler formula, we can use
$\neg\bigvee_{i}(\alpha_i\land \beta_i) \iff  \bigwedge_i(\neg\alpha_i\lor\neg\beta_i)$ and then distribute. Conjunction can be reduced to negation and disjunction.

Finally, suppose that
$$\phi(\tup x;\tup y)=\exists z~\Big(\dist(z,\tup x\tup y)\le \rho^+(k+1,q-1)\Big)\land \psi(\tup x \tup y z),$$
where $\psi(\tup x,\tup y z)$
has distance rank $(k+1,q-1)$ and $q\ge 1$.
Then
$\phi(\tup x;\tup y)$ is equivalent  to the disjunction of  the following formulas:
\begin{enumerate}
    \item
 $\exists z~ (\dist(z,\tup x)\le \rho^-(k+1,q-1))\land   \psi(\tup x \tup y z)$,
 \item
 $\exists z~(\dist(z,\tup y)\le \rho^-(k+1,q-1))\land \psi(\tup x \tup y z)$,
 \item
 $\exists z~\Big(\rho^-(k+1,q-1)< \dist(z,\tup x \tup y)\le\rho^+(k+1,q-1)\Big)\land \psi(\tup x \tup y z)$.
\end{enumerate}

In case 1, the condition
$\dist(z,\tup x)\le \rho^-(k+1,q-1)$
together with \eqref{eq:far-ass}, \eqref{eq:prop-of-rho}
and the triangle inequality imply
$$\dist(z,\tup y)\ge \dist(\tup x,\tup y)-\dist(z,\tup x)> \rho^-(k,q)-\rho^-(k+1,q-1) \ge \rho^+(k+1,q-1) \ge \rho^-(k+1,q-1).$$
 We can therefore use the inductive assumption to rewrite $\psi(\tup x \tup y z)$
 into a disjunction of formulas $\alpha(\tup x z)\land \beta(\tup y)$,
 where $\alpha$ and $\beta$ are local \distFO formulas of distance rank $(k+1,q-1)$.
 Therefore, using the fact that existential quantifications and disjunctions commute,
the formula in case 1
is equivalent to a disjunction of formulas of the form
$$\exists z~
\Bigl(\dist(z,\tup x)\le \rho^-(k+1,q-1)\Bigr)\land \alpha(\tup xz)\land \beta(\tup y),$$
 which is equivalent to the formula
 $$\beta(\tup y)\ \land\  \exists z~ \Bigl(\dist(z,\tup x)\le \rho^-(k+1,q-1)\Bigr)\land \alpha(\tup x z).$$
 Each disjunct is (equivalent to) a local \distFO formula of distance rank $(k,q)$, finishing case 1.
 Case 2 follows analogously.

 In case 3, we apply the inductive assumption to $\psi$
 and rewrite it into an equivalent disjunction of formulas of the form
$$\exists z~\Big(\rho^-(k+1,q-1)<\dist(z,\tup y \tup x)\le \rho^+(k+1,q-1)\Big)\land \alpha(\tup x\tup y)\land \beta(z),$$
where $\alpha$ and $\beta$ are local \distFO formulas of distance rank $(k+1,q-1)$.
This again uses the fact that existential quantifications and disjunctions commute.
The above formula is equivalent to
\begin{align}\label{eq:w}
\alpha(\tup x\tup y)\land
\exists z~\Big(\rho^-(k+1,q-1)<\dist(z,\tup y \tup x)\le \rho^+(k+1,q-1)\Big)\land \beta(z).
\end{align}
We argue that, assuming \eqref{eq:far-ass}, the following formulas are equivalent:

\begin{subequations}\label{eq:eq1}
\begin{align}
  &\rho^-(k+1,q-1)<\dist(z,\tup x\tup y)\le \rho^+(k+1,q-1), \label{eq:eq1a}\\[10pt]
  &\rho^-(k+1,q-1)<\dist(z,\tup x)\le \rho^+(k+1,q-1)  \notag\\
  \lor\quad &\rho^-(k+1,q-1)<\dist(z,\tup y)\le \rho^+(k+1,q-1).
   \label{eq:eq1b}
\end{align}
\end{subequations}
Indeed, clearly, \eqref{eq:eq1a} implies \eqref{eq:eq1b}. For the converse,
    suppose (by symmetry) that
    $\rho^-(k+1,q-1)<\dist(z,\tup x)\le \rho^+(k+1,q-1)$ holds.
    Using the triangle inequality and \eqref{eq:prop-of-rho}, we obtain
    $$\dist(z,\tup y)\ge \dist(\tup x,\tup y)-\dist(z,\tup x)>\rho^-(k,q)-\rho^+(k+1,q-1) \ge \rho^-(k+1,q-1),$$
    which implies \eqref{eq:eq1a}.

Therefore, the formula \eqref{eq:w}
is equivalent to
\begin{align}\label{eq:almostdone}
\alpha(\tup x\tup y)\land \exists z~& \Big(\rho^-(k+1,q-1)<\dist(z,\tup x )\le \rho^+(k+1,q-1)\Big)\land  \beta(z) \notag \\\lor\quad
\alpha(\tup x\tup y)\land \exists z~& \Big(\rho^-(k+1,q-1)<\dist(z,\tup y)\le \rho^+(k+1,q-1)\Big)\land  \beta(z).
\end{align}

We may apply the inductive assumption to the formula $\alpha(\tup x,\tup y)$ of distance rank \((k+1,q-1)\) to obtain an equivalent
disjunction of formulas of the form \(\alpha'(\bar x) \land \beta'(\bar y)\).
By basic distributivity laws, \eqref{eq:almostdone} is equivalent to a disjunction of formulas of the forms
\begin{align*}
& \alpha'(\tup x) \land \beta'(\tup y)\land \exists z~ \Big(\rho^-(k+1,q-1)<\dist(z,\tup x )\le \rho^+(k+1,q-1)\Big)\land  \beta(z), \\
& \alpha'(\tup x) \land \beta'(\tup y)\land \exists z~ \Big(\rho^-(k+1,q-1)<\dist(z,\tup y )\le \rho^+(k+1,q-1)\Big)\land  \beta(z).
\end{align*}

This construction is effective and yields the conclusion of the lemma.
 \end{proof}

\subsection{Far quantification lemma}
The following key lemma establishes an important special case of \Cref{thm:localglobal}.
\begin{lemma}\label{lem:submain}
    Fix $k,q\ge 1$, and let $\beta(x)$ be a local \distFO formula of  distance rank $(k+1,q-1)$ and let $\bar y$ be a set of $k$ variables.
    Then the formula
    $$\exists x~\Big(\dist(x,\tup y)>\rho^-(k+1,q-1)\Big)\ \land\  \beta(x)$$
    is equivalent to a boolean combination of local \distFO formulas of
     distance rank $(k,q)$ and of scatter sentences of distance rank $(k,q)$.
    Moreover, this boolean combination can be effectively computed, given $k,q$, and $\beta$.
\end{lemma}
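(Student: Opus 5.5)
The plan is to fix $\rho\eqdef\rho^-(k+1,q-1)$ and $X\eqdef\{a\mid A\models\beta(a)\}$, and to split according to whether a witness $x$ can be found close to $\tup y$. By \Cref{lem:ltp-local-subgraph}, $\beta$ is semantically $\rho$-local. The first ingredient is the local formula
\[
\gamma(\tup y)\eqdef\exists x~\bigl(\dist(x,\tup y)\le\rho^+(k+1,q-1)\bigr)\land \bigl(\dist(x,\tup y)>\rho\bigr)\land\beta(x).
\]
The atom $\dist(x,\tup y)>\rho$ has radius $\rho=\rho^-(k+1,q-1)$, which is allowed inside a formula of distance rank $(k+1,q-1)$. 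Hence $\gamma$ is a local \distFO formula of distance rank $(k,q)$. The target formula is equivalent to $\gamma(\tup y)\lor\delta(\tup y)$, where $\delta$ asserts that some $x\in X$ satisfies $\dist(x,\tup y)>\rho^+(k+1,q-1)\ge 9^{k+1}\rho$. So it suffices to express $\delta$, or more precisely $\neg\gamma\land\delta$, as a boolean combination of local formulas of distance rank $(k,q)$ and scatter sentences.

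For this I will use scatter sentences with $i=1$ and radii $r_j\eqdef 4\cdot 9^j\rho$ for $j=0,\dots,k$. These satisfy $4\rho\le r_j\le 9^{k+1}\rho$, as \eqref{eq:scatter-radius} requires. I will also use thresholds $t\le k+1\le k+q$. Let $S_j$ be the fixed maximal $r_j$-scattered subset of $X$, of size $s_j\eqdef s_{A,r_j,\beta}$.

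The key step is a pigeonhole on scales applied to the tuple $\tup y$ alone. Group the $k$ vertices of $\tup y$ into clusters at scale $r_j$, namely the connected components of the graph joining $y,y'$ when $\dist(y,y')\le c\, r_j$ for a suitable constant $c$. As $j$ increases the clustering only coarsens, so it can change at most $k-1$ times. Among the $k+1$ scales there is therefore some $j$ at which the clustering is stable between $r_j$ and $9r_j$: each cluster has diameter small compared with $r_j$ (at most $k\cdot c\,r_{j-1}$, which needs adjusting constants), and distinct clusters are more than $9r_j$ apart. Which $j$ works, the clustering, and its number $m$ of clusters are all expressible by boolean combinations of binary distance atoms among $\tup y$. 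Their radii are at most $9^{k+1}\rho\le\rho^-(k,q)$, so these formulas have distance rank $(k,q)$, and I can case-split over all possibilities. Inside a fixed case with stable scale $j$, I also case-split using local formulas $\eta_C(\tup y)$, one for each cluster $C$, stating that some point of $X$ lies within distance about $r_j$ of $C$. These are local quantifications of radius at most $\rho^+(k+1,q-1)$.

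The main claim is then that, under $\neg\gamma$ and a fixed case, $\delta$ holds iff $s_j>m'$, where $m'$ is the number of clusters $C$ with $\eta_C$ true; this is the scatter sentence $\mathbf{scatter}(r_j,\beta,m'+1)$ with $m'+1\le k+1$. For the upper bound, every element of $S_j$ that is not far from $\tup y$ must be within distance about $r_j$ of some cluster, because $\neg\gamma$ says there are no points of $X$ at intermediate distance. Each cluster has diameter below $r_j/2$, so it can absorb at most one element of the $r_j$-scattered set $S_j$. Conversely, by maximality of $S_j$, each cluster with $\eta_C$ true is within distance $2r_j$ of some element of $S_j$, and by the gap $9r_j$ between clusters these elements are distinct for distinct clusters. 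If a far $x\in X$ exists, maximality also yields an element of $S_j$ within $r_j$ of $x$, which is far from every cluster, so $s_j\ge m'+1$. If no far $x$ exists, all of $S_j$ is attributed to clusters, so $s_j\le m'$.

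I expect this counting to be the main obstacle: the set $S_j$ is an arbitrary maximal scattered set rather than a maximum one. Making the constants line up, namely the cluster diameter, the $9$-fold gap, the semantic locality radius $\rho\le r_j/4$, and the exact neighbourhood radius in $\eta_C$, so that ``near a cluster'' and ``far'' are cleanly separated, is where I would spend the effort. Effectiveness is clear, since everything is a finite explicit case distinction.
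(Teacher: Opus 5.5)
Your overall structure matches the paper's. The local formula $\gamma$ is condition (2a) of \Cref{lem:scatter}, and under $\neg\gamma$ you compare the number of clusters of $\bar y$ met by $X$ with the size of a fixed maximal scattered subset of $X$. The near/far counting is sound provided that, for a single scatter radius $r^*\in[4\rho,9^{k+1}\rho]$, you have a partition of $\bar y$ whose clusters have internal distances at most $r^*-2\rho$ and are pairwise more than $r^*+2\rho$ apart.

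The genuine gap is the step that produces this partition. Counting how often the single-linkage partition changes does not control cluster diameters, because one change can merge a long chain. For example, if $y_1,\dots,y_k$ lie on a path with consecutive distances $\tau$, the component at threshold $\tau$ has diameter $(k-1)\tau$. Stability of the partition between thresholds $\tau$ and $9\tau$ therefore only gives separation $>9\tau$ against diameters up to $(k-1)\tau$. Once $k\ge 10$, these guarantees suffice for no choice of $r^*$, let alone for your ``diameter $<r_j/2$, separation $>9r_j$''. Your bound $k\,c\,r_{j-1}$ would moreover need the partition at scale $j-1$ to agree as well, which the pigeonhole does not give.

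This is not a matter of adjusting constants. Absorbing the factor $k$ needs either a ratio $\Omega(k)$ between consecutive scales or long runs of stable scales. Either way the radii reach $k^{\Theta(k)}\rho$, outside the window $4\rho\le r\le 9^{k+1}\rho$ of \eqref{eq:scatter-radius}. You cannot escape to a larger $i$ there, since $\beta$ only has distance rank $(k+1,q-1)$, so $i=1$ is forced.

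The missing idea is the adaptive Vitali-type argument of \Cref{lem:vitali} and \Cref{cor:clusters}. It pays a factor $9$ per \emph{deleted centre} rather than per change of partition:
\begin{itemize}
\item Start with all $y_i$ as centres and $R=\rho$.
\item While two centres are within distance $8R$, delete one of them and replace $R$ by $9R$. Everything within $R$ of the deleted centre is within $R+8R=9R$ of the surviving one, so the covering property is maintained.
\item After at most $k-1$ deletions you have $\rho\le R\le 9^{k-1}\rho$, surviving centres pairwise more than $8R$ apart, and every $y_i$ within $R$ of one of them.
\end{itemize}
Grouping the $y_i$ by centre gives internal distances at most $2R$ and separation more than $6R$. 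This is exactly what your counting needs with $r^*=4R$. Note that $4R=r_t$ for some $t\le k-1$, so your family of scatter radii was the right one; only the choice of scale and partition, and its justification, must change. The outcome of the procedure is determined by atoms $\dist(y_i,y_j)\le d$ with $d\le 8\cdot 9^{k-1}\rho\le\rho^-(k,q)$, so your case distinction stays expressible in distance rank $(k,q)$. With this replacement, the rest of your argument essentially becomes the paper's proof.
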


The proof of \Cref{lem:submain}
relies on a combinatorial argument which is encapsulated in \Cref{lem:scatter} below.
We start with the following observation, also made by \cite[Lem. 3]{learningLogic},
which is similar in spirit to the Vitali covering lemma.

\begin{lemma}[{\cite{learningLogic}}]\label{lem:vitali}
    Given $r,c\in\N$, a graph, and a set $K$ of its vertices, there are $R\in\N$ and a set \(K' \subseteq K\),
     such~that
    \begin{enumerate}
        \item $r\le R\le r\cdot (c+1)^{|K|-1}$,
        \item the vertices in \(K'\) have pairwise distance larger than \(c R\), and
        \item every vertex in \(K\) has distance at most \(R\) to \(K'\).
    \end{enumerate}
\end{lemma}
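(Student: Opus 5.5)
The plan is an iterative greedy clustering argument. Let $K_0=K$. Start with scale $R_0=r$, and repeat the following step.

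At stage $j$ we hold a current scale $R_j$ and a set $K_j\subseteq K$ with two properties:
\begin{itemize}
    \item every vertex of $K$ is within distance $R_j$ of $K_j$;
    \item $R_j = r(c+1)^{|K|-|K_j|}$.
\end{itemize}
Both hold at stage $0$: trivially for the first, and $R_0=r(c+1)^0=r$ for the second.

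If the vertices of $K_j$ are pairwise at distance greater than $cR_j$, we stop. We output $K'=K_j$ and $R=R_j$, so conditions 2 and 3 hold.

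Otherwise there are distinct $u,v\in K_j$ with $\dist(u,v)\le cR_j$. We delete $v$, setting $K_{j+1}=K_j\setminus\{v\}$, and let $R_{j+1}=(c+1)R_j$. Any vertex $w\in K$ whose representative within distance $R_j$ was $v$ now satisfies
\[
\dist(w,u)\le R_j + cR_j = (c+1)R_j
\]
by the triangle inequality. Vertices whose representative was some other element of $K_j$ keep it, since $R_j\le R_{j+1}$. So the covering invariant holds at scale $R_{j+1}$. The formula for $R_j$ is also maintained, because $|K_{j+1}|=|K_j|-1$ and the scale grows by exactly one factor of $c+1$.

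For the bound in condition 1, we check termination. Each step removes a vertex, and when $|K_j|=1$ the separation condition holds vacuously, so the process stops with $|K_j|\ge 1$ (assuming $K\neq\emptyset$; if $K$ is empty, take $K'=\emptyset$ and $R=r$). Hence $R\le r(c+1)^{|K|-1}$, and $R\ge r$ is clear.

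There is no real obstacle here. The only point needing care is that merging uses the old scale $R_j$ for both the covering distance and the separation threshold $cR_j$, which is what makes the factor exactly $c+1$ rather than something larger.
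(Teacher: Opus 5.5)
Your proof is correct and follows essentially the same route as the paper: repeatedly delete one of two surviving vertices at distance at most $cR$, multiply the scale by $c+1$, and use the triangle inequality to reassign the deleted vertex's covered points to the survivor, with termination forced once a single vertex remains. One small caveat: for $K=\emptyset$, condition 1 would require $r\le R\le r/(c+1)$, which is impossible when $r,c\ge 1$, so your choice $R=r$ does not literally satisfy it --- but this is a degenerate defect of the statement itself (the paper's proof ignores it too), and the lemma is only applied with $|K|\ge 1$.
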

\begin{proof}
For  \(t=0,1,2,\ldots\), we construct a subset \( K_t \subseteq K \) of size \(|K|-t\)
such that every vertex in \(K\) has distance at most \(R(t)\eqdef  r (c + 1)^{t} \) to \(K_t\).
Set \(K_0 \eqdef  K\).
For $t\ge 0$, if every pair of vertices in \(K_t\) has distance larger than \(c R(t)\), we are done.
Otherwise, pick vertices \(u,v \in K_t\) at distance at most \(c R(t)\).
Set \(K_{t+1} \eqdef  K_t - \{u\}\),
and observe that every vertex that has distance \(R(t)\) to \(u\) has distance \(R(t) + c R(t) = R(t+1)\) to \(v\).
This process terminates at the latest when \(t=|K|-1\).
\end{proof}

\begin{corollary}\label{cor:clusters}
    Fix $r,k\in\N$.
    Let \(G\) be a graph and \(\bar a\in V(G)^k\). There is some $R$ with \(r \le R \le 9^{k-1} r\)
    and a partition of $N_r^G(\tup a)$ into \emph{clusters} with the following properties:
    \begin{itemize}
        \item every cluster is a union of sets of the form $N_r^G(a_i)$, for $i\in [k]$,
        \item every cluster is contained in a ball of radius \(2R\), and
        \item clusters have pairwise distance larger than \(4R\).
    \end{itemize}
    Moreover the number $R$
    and the partition of  $\set{a_1,\ldots, a_k}$ into clusters
    can be determined by tests of the form $\dist^G(a_i,a_j)\le d$, for $i,j\in[k]$ and $d\le 8\cdot 9^{k-1} r$.
\end{corollary}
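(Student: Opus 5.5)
We apply \Cref{lem:vitali} to the Gaifman graph $G$, with $K\eqdef\set{a_1,\ldots,a_k}$ (as a set, so $|K|\le k$), the radius $2r$, and the constant $c\eqdef 8$. This yields $R_0$ with $2r\le R_0\le 2r\cdot 9^{|K|-1}\le 2r\cdot 9^{k-1}$, and a subset $K'\subseteq K$ whose elements are pairwise at distance larger than $8R_0$. Moreover, every $a_i$ is within distance $R_0$ of $K'$. We then set $R\eqdef R_0/2$, or more precisely $R\eqdef\lceil R_0/2\rceil$; handling this rounding is a routine point. With this choice, $r\le R\le 9^{k-1}r$.

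To define the clusters, we assign each $a_i$ to one element of $K'$ at distance at most $2R$ from it. For each $b\in K'$, the cluster of $b$ is the union of the balls $N_r(a_i)$ over all $a_i$ assigned to $b$.

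We now check the three properties.
\begin{itemize}
    \item Every vertex of $N_r(\tup a)$ lies in some $N_r(a_i)$, hence in some cluster. So the clusters cover $N_r(\tup a)$, and each cluster is a union of balls $N_r(a_i)$.
    \item Every point of a cluster is within distance $2R+r\le 3R$ of its center $b$. This bound is slightly too weak for the required radius $2R$, so the parameters need tuning. Concretely, we run \Cref{lem:vitali} with radius $r$ instead of $2r$ and with $c\eqdef 8$, obtaining $r\le R'\le 9^{k-1}r$ and centers pairwise at distance more than $8R'$. Then each point of a cluster is within distance $R'+r\le 2R'$ of its center, so taking $R\eqdef R'$ puts each cluster inside a ball of radius $2R$.
    \item Two clusters with distinct centers $b\neq b'$ consist of points within $2R$ of $b$ and of $b'$ respectively. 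Since $\dist(b,b')>8R$, the clusters are at distance more than $8R-4R=4R$.
\end{itemize}
It remains to make the clusters disjoint, which is needed for a partition. If two clusters intersected, they would be at distance $0\le 4R$, contradicting the previous property. So the clusters are automatically disjoint and form a partition of $N_r(\tup a)$.

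The main obstacle is the ``moreover'' clause: $R$ and the partition must be determined by distance tests with radius at most $8\cdot 9^{k-1}r$. The plan is to observe that the procedure in the proof of \Cref{lem:vitali} is deterministic, given a fixed order on indices. At step $t$ it only asks whether some pair of current points satisfies $\dist(a_i,a_j)\le 8R(t)$, where $R(t)=r\cdot 9^t$ and $t\le k-2$ whenever such a test is made. Hence every test has radius at most $8\cdot 9^{k-1}r$. The assignment of each $a_i$ to a center is likewise resolved by tests $\dist(a_i,b)\le R$, where we pick the least-index such center; these tests also satisfy the radius bound.

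Repeated indices ($a_i=a_j$) are detected by the test $\dist\le 0$, so duplicates cause no problem. Since $R$ depends only on the step $t$ at which the process stops, and the partition only on the removal and assignment choices, both are functions of the outcomes of these finitely many tests.
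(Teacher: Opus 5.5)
Your final argument is correct and is essentially the paper's proof. You apply \Cref{lem:vitali} with radius $r$ and $c=8$, assign each $a_i$ to a center within distance $R$, and bound the cluster radius by $R+r\le 2R$ and the inter-cluster distance by more than $8R-4R=4R$. As a write-up, drop the abandoned $2r$ attempt and the leftover ``at distance at most $2R$'' in the assignment step, which would break the $2R$ radius bound. For the ``moreover'' clause you simulate the deterministic greedy procedure instead of checking each candidate choice of centers, assignment map and $R$ as the paper does; both are valid.
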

\begin{claimproof}
        Apply
          \Cref{lem:vitali} to \(K\eqdef\set{a_1,\ldots,a_k}\) and $c\eqdef 8$.
          We get \(I \subseteq [k]\), \(m : [k] \to I\), and  \(r \le R \le 9^{k-1} r\) such that
        \(\dist^G(a_i,a_j) > 8 R\) for all \(i\neq j \in I\), and
        \(\dist^G(a_i,a_{m(i)}) \le R\) for all \(i \in [k]\).
        As the clusters, take the sets \(\bigcup_{j \in m^{-1}(i)} N^G_r(a_j) \), for $i\in I$.

        Then every cluster is contained in a ball of radius \(R+r \le 2R\),
and clusters have pairwise distance larger than \(8R-2r-2R \ge 4R\).

For the ``moreover'' part, observe that for every fixed set $I\subset [k]$,mapping $m:[k]\to I$, and radius $R$ with $r\le R\le 9^{k-1} r$,
the conditions
\(\dist^G(a_i,a_j) > 8 R\)
        and \(\dist^G(a_i,a_{m(i)}) \le R\) can be verified
using tests of the form $\dist^G(a_i,a_j)\le d$, for $i,j\in[k]$ and $d\le 9^{k-1}\cdot 8\cdot r$.
\end{claimproof}



The following lemma is
the combinatorial counterpart of \Cref{lem:submain},
and forms
the core of the
proof of \Cref{thm:localglobal}.

\begin{lemma}\label{lem:scatter}
    Let \(r,k \in \N\) with $k\ge 1$.
    Let \(G\) be a graph, \(\bar a\in V(G)^k\) a tuple of its  vertices, and
    let $R$ with \(r \le R \le 9^{k-1} r\) and the partition of $N_r^G(\bar a)$ into clusters be as in \Cref{cor:clusters}.
    Let $X\subset V(G)$, and let $s$ be the size of some inclusion-wise maximal $4R$-scattered subset of $X$. Let $H\ge 4R+r$.
    Then the following conditions are equivalent:
\begin{enumerate}
    \item there is some $x\in X$ with $\dist(x,\tup a)>r$,

    \item
    \begin{enumerate}
        \item either there is some $x\in X$ with $r<\dist(x,\tup a)\le H$,

        \item or less than $s$ clusters intersect $X$.
    \end{enumerate}
\end{enumerate}
\end{lemma}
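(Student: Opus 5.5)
The plan is to prove the two implications separately. The direction (2)$\Rightarrow$(1) uses only that clusters have small diameter. The direction (1)$\Rightarrow$(2) uses the maximality of the $4R$-scattered set together with the fact that clusters are more than $4R$ apart. Throughout, fix an inclusion-wise maximal $4R$-scattered set $S\subseteq X$ with $|S|=s$. The case $s=\infty$ is handled by the same arguments, applied to arbitrarily large finite maximal sets.

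For (2)$\Rightarrow$(1): if (2a) holds, then (1) holds trivially. For (2b), argue by contraposition. Suppose every $x\in X$ satisfies $\dist(x,\bar a)\le r$. Then $X\subseteq N_r^G(\bar a)$, which is partitioned into clusters. By \Cref{cor:clusters} each cluster lies in a ball of radius $2R$, so any two of its vertices are at distance at most $4R$. Hence $S$ contains at most one vertex from each cluster. Every vertex of $S$ lies in $X$ and therefore in a cluster that intersects $X$. So $s$ is at most the number of clusters intersecting $X$, which contradicts (2b). When $s=\infty$ the same bound applies to arbitrarily large finite maximal sets, again a contradiction.

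For (1)$\Rightarrow$(2): assume (1) holds and (2a) fails. Then every $x\in X$ satisfies either $\dist(x,\bar a)\le r$ or $\dist(x,\bar a)>H$, and at least one $x_0\in X$ is of the second kind. We show that $S$ has strictly more elements than there are clusters intersecting $X$, which gives (2b).
\begin{enumerate}
\item Take a cluster $C$ meeting $X$ at some $y$. By maximality of $S$ there is $s_C\in S$ with $\dist(y,s_C)\le 4R$. Then $\dist(s_C,\bar a)\le 4R+r\le H$, so by the failure of (2a) we get $s_C\in N_r^G(\bar a)$, and hence $s_C$ lies in some cluster.
\item That cluster must be $C$ itself. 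Indeed, $\dist(y,s_C)\le 4R$, while distinct clusters are at distance more than $4R$.
\item Since the clusters are disjoint, the map $C\mapsto s_C$ is injective.
\item Likewise, maximality gives $s_0\in S$ with $\dist(x_0,s_0)\le 4R$. Then $\dist(s_0,\bar a)>H-4R\ge r$, so $s_0$ lies in no cluster and is distinct from all the $s_C$.
\end{enumerate}
Thus $s$ is at least the number of clusters meeting $X$, plus one. This gives (2b), and it holds trivially if $s=\infty$.

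The main obstacle is the step in (1)$\Rightarrow$(2) that places $s_C$ inside $C$. This is exactly where the hypothesis $H\ge 4R+r$ and the cluster separation $>4R$ are used. Everything else is bookkeeping.
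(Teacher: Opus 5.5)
Your proof is correct and uses the same argument as the paper. Maximality of $S$ gives, for each point of $X$, an element of $S$ within distance $4R$. The absence of points of $X$ at distance in $(r,4R+r]$ from $\bar a$ (which you get from the failure of (2a) and $H\ge 4R+r$) puts that element inside $N_r(\bar a)$. Cluster separation then puts it in the same cluster, cluster diameter at most $4R$ gives uniqueness, and a far point of $X$ forces an extra element of $S$ outside $N_r(\bar a)$.

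The differences are only organizational. You prove the two implications separately, and your direction (2b)$\Rightarrow$(1) does not need the margin assumption; the paper instead splits on whether such an intermediate point exists and then shows that exactly $|S\cap N_r(\bar a)|$ clusters meet $X$. You also handle $s=\infty$ explicitly, which the paper leaves implicit.
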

\begin{proof}
    Suppose first that
there is some $x\in X$ with $r<\dist(x,\bar a)\le 4R+r$.
In this case, clearly both conditions (1) and (2) are satisfied,
as $4R+r\le H$.
Assume therefore that this is not the case, that is,
 \begin{align}\label{eq:margin}
\dist(x,\tup a)\le 4R+r \quad\Longrightarrow\quad \dist(x,\tup a)\le  r\qquad\text{for all $x\in X$.}
 \end{align}
We show that then conditions (1) and (2b) are equivalent.
Since clearly (2a) implies (1), in fact (1) and (2) are equivalent in this case, proving the lemma.

Let $S$ be an inclusion-wise maximal $4R$-scattered subset of $X$ of size $s$.
In particular, $X\subseteq N_{4R}(S)$.

We show that each cluster $C$ which intersects $X$ contains exactly one element of $S$.
Indeed, suppose $v\in C\cap X$. As $v\in X\subseteq N_{4R}(S)$,
there is some $s\in S$ with $\dist(s,v)\le 4R$ and so $$\dist(s,C)\le 4R.$$
Since $v\in C\subset N_r(\bar a)$,
this
  implies $\dist(s,\bar a)\le 4R+r$.
  By \eqref{eq:margin}, we have that $\dist(s,\bar a)\le r$, in other words, $s\in N_r(\bar a)$.
  As $N_r(\bar a)$ is the union of all clusters,
   $s$ belongs to some cluster.
  Since the clusters have pairwise distance larger than $4R$ and $\dist(s,C)\le 4R$,
  we conclude that $s\in C$.
  Moreover, because $C$ is contained in a ball of radius $2R$, it cannot contain two elements of $S$, so $|S\cap C|=1$.

  By the above,
the number of
  clusters intersected by $X$ is equal to  $|S\cap N_r(\bar a)|$.
Condition (2b) is thus equivalent to $|S\cap N_r(\bar a)|<|S|$. We show that this is equivalent to condition (1).

Suppose
condition (1) holds. By \eqref{eq:margin},
there is
some $x\in X$ with $\dist(x,\bar a)>4R+r$.
Therefore,
$(S\cap N_r(\bar a))\cup \set{x}$
is a $4R$-scattered subset of $X$,
so $S\cap N_r(\bar a)$ is not an inclusion-wise maximal $4R$-scattered subset of $X$,
which implies $|S\cap N_r(\bar a)|<|S|$.

Conversely, if $|S\cap N_r(\bar a)|<|S|$ then
 there is some
$s\in S-  N_r(\bar a)$.
In particular, condition (1) holds.
Thus, assuming \eqref{eq:margin},
conditions (1) and (2b) are equivalent.
\end{proof}

\begin{proof}[Proof of \Cref{lem:submain}]
    Set $r\eqdef \rho^-(k+1,q-1)$.
    Let $A$ be a structure and $\tup a\in V(A)^k$.
    By applying \Cref{cor:clusters} to the Gaifman graph of $A$ and the tuple $\tup a$, we obtain
    a number $R$ with $r\le R\le 9^{k-1}r$ and partition $\cal P$ of $\set{a_1,\ldots,a_k}$ into clusters as in the statement of the lemma. Moreover, some
    $R$ and $\cal P$ with the required properties
    can be determined
    by performing tests of the form
    $\dist(a_i,a_j)\le d$ for $d<9^{k-1}\cdot 8\cdot r$. We have
    $$9^{k-1} \cdot 8\cdot  r\le 9^{k}\cdot  \rho^-(k+1,q-1)\le \rho^+(k+1,q-1)\le \rho^-(k,q),$$
         where the middle inequality follows from the second condition in \eqref{eq:main-property-rho}
         and the last inequality follows from the first condition in \eqref{eq:main-property-rho}.
So such tests can be performed by binary distance atoms in a local \distFO formula of distance rank $(k,q)$.

Set $H\coloneqq \rho^+(k+1,q-1)$. Since $R\le 9^{k-1}r$ and
\(r=\rho^-(k+1,q-1)\),
\[
    4R+r\le (4\cdot 9^{k-1}+1)r\le 9^{k+1}r\le H,
\]
where the last inequality follows from the second condition in \eqref{eq:main-property-rho}.
    We argue that conditions (2a) and (2b) in \Cref{lem:scatter} can be effectively expressed
    as a boolean combination of local \distFO formulas and scatter sentences of distance rank \((k,q)\).
    By \Cref{lem:scatter},
    this is enough to
    prove \Cref{lem:submain}.
    Condition (2a)
    is expressed by
     a single local \distFO formula of distance rank $(k,q)$:
    \[
\exists x\,
\bigl(\dist(x,\bar y)\le \rho^+(k+1,q-1)\bigr)
\land
\bigl(\dist(x,\bar y)>r\bigr)
\land
\beta(x),
\]

    We hence focus on condition (2b).
    Let \(s=s_{A,4R,\beta}\) be the size of the inclusion-wise maximal \(4R\)-scattered subset of $X\eqdef\setof{a\in A}{A\models \beta(a)}$ as discussed in \Cref{sec:typedefs}.
    Since \(r\le R\le 9^{k-1}r\) and \(r=\rho^-(k+1,q-1)\),
\[
4\rho^-(k+1,q-1)=4r\le 4R
\le 4\cdot 9^{k-1}r
\le 9^{k+1}r
=
9^{k+1}\rho^-(k+1,q-1)
\le \rho^-(k,q).
\]
This verifies the scatter-radius condition with radius \(4R\) and \(i=1\), where the last inequality follows from \eqref{eq:main-property-rho}.
     Thus,
    scatter sentences of distance rank \((k,q)\) determine either the exact value of \(s\) or the information that \(s \ge k+1\).

    For each $P\in\cal P$,
    let $C_P\subset V(A)$ denote the cluster
    containing $P$.
    Then $C_P=\bigcup_{a\in P}{N_r^A(a)}$, so
    the cluster $C_P$ intersects $X$
    if and only if
    $$A\models \bigvee_{a_i\in P}\exists x~(\dist(x,a_i)\le r)\land \beta(x).$$
    Thus we can build local \distFO formulas of distance rank $(k,q)$ that count how many of the up to \(k\) clusters intersect \(X\).
    To check if this quantity is smaller than \(s\), as required by condition (2b), it is sufficient to know either the exact value of \(s\) or the information that \(s \ge k+1\).
    As discussed above, this information is encoded by scatter sentences of distance rank \((k,q)\).
    %
    %
    %
\end{proof}

\subsection{Proof of the locality theorem}

We now prove \Cref{thm:localglobal}, repeated below.
\localglobalgame*

\begin{proof}
    We prove the statement by structural induction on $\phi(\tup x)$.
    Atoms of first-order logic, binary distance atoms, and unary distance atoms are local \distFO formulas, and boolean combinations are immediate.
    The case of local existential quantification is also straightforward: apply the inductive assumption to the quantified subformula, pull scatter sentences outside the quantifier, and distribute; the guarded existential quantification of each resulting local formula is again local.
    The only nontrivial case is when
    $\phi(\bar x)$ is of the form
    \begin{align}\label{eq:form1}
    \exists y~ \psi(y,\bar x),
    \end{align}
    where $\psi$ is a \distFO formula of distance rank $(k+1,q-1)$ which,
    by inductive assumption,
    may be assumed to be a boolean combination of scatter sentences and
    local \distFO formulas of distance rank $(k+1,q-1)$.
    Since scatter sentences can be pulled outside the existential quantifier
    (and \Cref{obs:preservescatterrank} allows to assume that they have distance rank $(k,q)$),
    and local \distFO formulas are closed under boolean combinations, we may furthermore assume  that
    $\psi(y,\bar x)$
    is a single local \distFO formula of distance rank $(k+1,q-1)$.

    In the case when $\bar x$ is empty, $\psi(y)$ is a local \distFO formula of distance rank $(k+1,q-1)$, and $\phi$ is a sentence which is equivalent to the scatter sentence $\textbf{scatter}(4\rho^-(k+1,q-1),\psi(y),1)$ of distance rank $(k,q)$, and we are done.
    We hence assume that $\bar x$ is non-empty, so $k\ge 1$.

Clearly, as $\rho^+(k+1,q-1)\ge \rho^-(k+1,q-1)$, the formula \eqref{eq:form1}
is equivalent to the disjunction
\begin{align}
    \exists y&\,\Big(\dist(y,\bar x)\le\rho^+(k+1,q-1)\Big)\land \psi(y,\bar x)
    \label{eq:form2}\\\lor\notag&\\   \label{eq:form3}
    \exists y&\,\Big(\dist(y,\bar x)>\rho^-(k+1,q-1)\Big)\land \psi(y,\bar x).
    \end{align}
The formula \eqref{eq:form2}
already is a local \distFO formula
of distance rank $(k,q)$.
By \Cref{lem:far-ltp},
the formula
\eqref{eq:form3}
is equivalent
to
a disjunction of formulas of the form
$$\exists y~\Big(\dist(y,\bar x)>\rho^-(k+1,q-1)\Big)\land \alpha(\bar x)\land \beta(y),$$
where $\alpha$ and $\beta$ are local \distFO formulas of distance rank $(k+1,q-1)$.
In turn, such a formula is equivalent to
    $$\alpha(\bar x)\land \exists y~\Big(\dist(y,\bar x)>\rho^-(k+1,q-1)\Big)\land \beta(y).$$
The conclusion follows by \Cref{lem:submain}.
\end{proof}

\printbibliography

\end{document}